\newcommand{\ones}{\mathbf 1}
\newcommand{\reals}{{\mathbb{R}}}
\newcommand{\argmin}{\mathop{\rm argmin}}
\newcommand{\argmax}{\mathop{\rm argmax}}
\newcommand{\norm}[1]{\left\lVert#1\right\rVert}
\newcommand{\mnorm}[1]{{\left\vert\kern-0.25ex\left\vert\kern-0.25ex\left\vert #1 
    \right\vert\kern-0.25ex\right\vert\kern-0.25ex\right\vert}}
\newcommand{\mc}{\mathcal}
\newtheorem{proposition}{Proposition}
\newtheorem{assumption}{Assumption}
\newtheorem{problem}{Problem}
\algnewcommand\algorithmicforeach{\textbf{for each}}
\newcommand{\thickbar}[1]{\mathbf{\bar{\text{$#1$}}}}
\newcommand{\thickhat}[1]{\mathbf{\hat{\text{$#1$}}}}
\newcommand{\tc}[2]{\textcolor{#1}{#2}}
\newcommand{\miny}{\min_{y\in\mc{Y}(P,p)}}
\newcommand{\argminy}{\argmin_{y\in\mc{Y}(P,p)}}
\newcommand{\change}[1]{\tc{black}{#1}}
\newcommand{\ctwo}[1]{\tc{black}{#1}}
\newcommand{\arxiv}[1]{#1} 
\begin{document}

\begin{frontmatter}
\title{Adaptive Constraint Satisfaction for Markov Decision Process Congestion Games: Application to Transportation Networks}
\author[AA]{Sarah H.Q. Li},  
\author[AA]{Yue Yu},  
\author[NM]{Nico Miguel}, 
\author[EE]{Daniel Calderone}
\author[EE]{Lillian J. Ratliff}
\author[AA]{Beh\c cet A\c c\i kme\c se}

\address[AA]{Department of Aeronautics and Astronautics, University of Washington, Seattle, USA. (e-mail:\{sarahli, yueyu, behcet\}@uw.edu).}
\address[EE]{Department of Electrical Engineering, University of Washington, Seattle, USA.
 (e-mail: \{djcal, ratliffl\}@uw.edu)}
\address[NM]{Department of Aeronautics and Astronautics, Purdue University, USA.  (e-mail: nmiguel@purdue.edu)}

\begin{abstract}                
Under the \emph{Markov decision process (MDP) congestion game} framework, we study the problem of enforcing population distribution constraints on a population of players with stochastic dynamics and coupled congestion costs. Existing research demonstrates that the constraints on the players' population distribution can be satisfied by enforcing tolls. However, computing the minimum toll value for constraint satisfaction requires accurate modeling of the player's congestion costs. 
Motivated by settings where an accurate congestion cost model is unavailable (e.g. transportation networks), we consider a MDP congestion game with \emph{unknown} congestion costs.
We assume that a constraint-enforcing authority can repeatedly enforce tolls on a population of players who converges to an $\epsilon$-optimal population distribution for any given toll. 
We then construct a myopic update algorithm to compute the minimum toll value while ensuring that the constraints are satisfied on average.
We analyze how the players' sub-optimal responses to tolls impact the rates of convergence towards the minimum toll value and constraint satisfaction.
Finally, we construct a congestion game model for Uber drivers in Manhattan, New York City (NYC) using data from the Taxi and Limousine Commission (TLC) to illustrate how to efficiently reduce congestion while minimizing the impact on driver earnings.
\end{abstract}

\begin{keyword}
Markov decision process, incentive design, congestion games, online optimization, transportation systems, stochastic games
\end{keyword}

\end{frontmatter}

\section{Introduction}\label{sec:intro}
Congestion games play a fundamental role in engineering~\cite{rosenthal1973class}. In large-scale networks such as urban traffic and electricity markets, congestion games capture how the competition among self-motivated decision-makers, known as the \emph{players}, impacts network-level trends~\cite{patriksson2015traffic,etesami2020smart}. In particular, when all the players are equipped with identical congestion costs and transition probabilities in a finite state-action space, the game can be modeled as a \emph{Markov decision process (MDP) congestion game}~\cite{calderone2017markov}. 

We study the feasibility of using tolls to enforce system-level constraints on a game in which both the players and the system operator do not know the true congestion costs. This is motivated by a plethora of constrained large-scale network systems such as autonomous swarms~\cite{demir2015decentralized}, urban transportation~\cite{lin2018environmental} and competitive electricity markets~\cite{miguelez2004practical}. For example, the Department of Transportation could meet carbon-emission targets by tolling fossil fuel vehicles on freeways~\cite{lin2018environmental}. In electricity markets, power auctions are followed by a procedure to predict and eliminate power violations through initializing offline generators~\cite{miguelez2004practical}. To minimize the incurred initialization cost, the system operator can use tolls during the auction to limit voltage demands.
Since tolling is a common and easily implementable mechanism in networked systems~\cite{surge_pricing},  we assume the system operator can freely impose tolls on the players. 

The \emph{minimum toll value} for satisfying constraints can be computed as a function of the congestion costs~\cite{us}. However, extracting the congestion cost is difficult when the player objectives are complex and unknown. This is also true in simulation engines and higher complexity models, where the effect of a given toll can be computed, but not the minimum toll value itself.
As such, we assume that an oracle exists who can compute the $\epsilon$-optimal solution for a game with a known toll. 
The inexact oracle is motivated by model-free learning algorithms that can approximate the Nash equilibria for routing games with unknown link costs~\cite{zhou2020reinforcement,krichene2015}. 

\textbf{Contributions.}
We derive a gradient-based tolling algorithm that enforces linear population distribution constraints on a class of MDP congestion games with unknown but strictly increasing congestion costs. The algorithm requires access to an inexact oracle that takes an input toll and returns an $\epsilon$-optimal population distribution for the toll-augmented game. We show a direct relationship between the $\epsilon$-optimal population distribution and an inexact gradient of the toll-augmented game with respect to the toll. We bound the following quantities as functions of the oracle's sub-optimality $\epsilon$: convergence of  1) the average toll value towards the minimum toll value, 2) the average population distribution towards the optimal distribution under the minimum toll, 3) the average constraint violation towards zero. Finally, we construct an MDP congestion game model using real-world data from NYC TLC~\cite{new_york_taxi} to demonstrate how our algorithm reduces Uber driver congestion in Manhattan.

The rest of this paper is organized as follows. In Section~\ref{sec:litReview}, we review related works. In Section~\ref{sec:prelim}, we  \ctwo{outline} MDP congestion games. In Section~\ref{sec:constraining_game}, we  \ctwo{formulate} the toll-augmented game and the inexact oracle. In Section~\ref{sec:dual_ascent_algorithm}, we  \ctwo{present} the tolling algorithm and prove its convergence properties. In Section~\ref{sec:ny_rideshare},  \ctwo{we create a queuing network MDP to reduce ride-share congestion levels in Manhattan, NYC. }
\section{Literature Overview} \label{sec:litReview}
MDP congestion games~\cite{calderone2017markov} are related to non-atomic routing games~\cite{beckmann1952continuous,patriksson2015traffic}, stochastic games~\cite{shapley1953stochastic}, and mean field games~\cite{lasry2007mean}, but differ in modeling assumptions. 
MDP congestion games extend non-atomic routing games by generalizing the player dynamics from deterministic to stochastic. 
Stochastic games assume that the player costs differ and are functions of the joint policy. MDP congestion games assume that the player costs are identical and are functions of the population distribution~\cite{yu2019primal}.
Finally, MDP congestion games are analogous to a discrete mean field game where the continuous stochastic processes are discretized in time, state, and action spaces.
We show that these differences in assumptions enable MDP congestion games to more easily model large-scale networks such as transportation systems.

Tolling schemes for non-atomic routing games have been studied under capacitated traffic assignment literature~\cite[Sec. 2.8.2]{patriksson2015traffic}. Adaptive game incentive design has also been considered in deterministic and stochastic settings in~\cite{ratliff2018adaptive} for players without MDP dynamics. Presently, we adopt a form of adaptive incentive design \ctwo{that guarantees} \emph{constraint satisfaction}. Tolling to satisfy external objectives is more generally interpreted as a Stackelberg game between a leader and its followers~\cite{swamy2007effectiveness}. Techniques for updating the Stackelberg leader's actions to optimize the social cost of its followers are derived in~\cite{roth2016watch}. Tolling non-atomic games \ctwo{under MDP dynamics with unknown congestion costs is the topic of this paper}.

Our minimum toll computation algorithm is an inexact gradient descent~\cite{devolder2014first}, and has been applied to settings such as distributed optimization and model predictive control~\cite{fazlyab2018distributed,necoara2013rate}. 
In game theory, the inexact gradient descent method has been applied to computing the Nash equilibria of a two player min-max game~\cite{ostrovskii2021efficient}. However, it has yet to be applied to constraint satisfaction with approximate Wardrop equilibria. 
\section{MDP Congestion Game} \label{sec:prelim}
\noindent\textbf{Notation.} The notation $[K]=\{0, \ldots, K-1\}$ denotes an index set of length $K$, $\reals (\reals_+)$ denotes a set of real (non-negative) numbers, $\ones_N$ denotes a vector of ones of size $N$, and $[x]_+ = \max\{x, 0\}$ denotes a vector-valued function in which $\max$ is element-wise applied to vectors $x$ and $0$.

Consider a continuous population of players, each with identical MDP dynamics and congestion costs over a state-action set $[S]\times[A]$ for $(T+1)$ time steps. 
\ctwo{Under the non-atomic game assumption, an individual player's probability distribution belongs to the zero measure subset of the population distribution (see~\cite[Sec.2]{krichene2015} for details).}
Presently, we only deal with the population distribution. We denote the set of feasible population distributions as $\mc{Y}(P,p)$, given by
\begin{equation}
    \begin{aligned}\label{eqn:feasibleY}
    \textstyle\mc{Y}(P,p)  = \{y \in \reals_+^{(T+1)S A}\Big| \, &
    \textstyle{\sum}_a y_{t+1,sa} =  {\sum}_{s',a}P_{tss'a}y_{ts'a}, \\
    & \textstyle{\sum}_ay_{0sa} = p_s \},
    \end{aligned}
\end{equation}
where $y_{tsa}$ is the portion of the playing population who takes action $a$ from state $s$ at time $t$. We emphasize that $y$ is a vector in $\reals_+^{(T+1)SA}$ whose coordinates are ordered as 
\begin{multline}\label{eqn:y_vector_order}
     y =  \left[ \begin{matrix}
            y_{000}&\ldots&  y_{010}&\ldots& y_{100}&\ldots &y_{T(S-1)(A-1)}\end{matrix}\right]^\top
\end{multline}
The transition dynamics are given by $P \in \reals^{T\times S\times S \times A}$, where $P_{tss'a}$ denotes the transition probability from state $s'$ to $s$ under action $a$ at time $t$. The transition dynamics satisfy $\sum_{s' \in [S]}P_{ts'sa} = 1$, and $P_{ts'sa}\geq 0,\ \forall\ (t, s', s, a) \in [T]\times [S]\times [S]\times [A]$.
The initial population distribution is given by $p \in \reals_+^{S}$, where $p_s$ denotes the portion of the \ctwo{playing population} in state $s$ at $t = 0$.

At time $t$, each player incurs a cost as a function of $y$, $\ell_{tsa}:\reals^{(T+1)SA} \rightarrow \reals$. We collect $\ell_{tsa}$ into a cost vector $\ell(y) \in \reals^{(T+1)SA}$ under the same ordering as $y$ in~\eqref{eqn:y_vector_order}. 
Similar to MDP literature, a player's expected cost-to-go at $(t, s, a)$ is its $Q$-value function, recursively defined as 
\begin{equation}\label{eqn:q_value_functions}
    Q_{tsa}(y) = \begin{cases}
    \ell_{tsa}(y) & t = T\\
    \ell_{tsa}(y) + \textstyle\underset{s'}{\sum}P_{ts'sa}\underset{a' \in [A]}{\min}Q_{t+1, s'a'}(y) & t \in[T]
    \end{cases}
\end{equation}
In an MDP congestion game, the $Q$-value functions depend on the players' collective action choices through $y$, the population distribution.
Each player minimizes its own Q-value function by choosing individual actions.
An \emph{MDP Wardrop equilibrium} $y^\star$ is reached if no player can unilaterally decrease its Q-value function further by changing its actions.
\begin{defn}[MDP Wardrop Equilibrium \cite{calderone2017markov}]\label{def:wardrop}
A population distribution $y^\star$ is a MDP Wardrop equilibrium if for every $(t,s,a) \in [T+1]\times [S]\times [A]$,
\begin{align}\label{eqn:wardropDef}
   y^\star_{tsa}> 0 \Rightarrow Q_{tsa}(y^\star) \leq Q_{tsa'}(y^\star), \quad \forall \ a' \in [A]
\end{align}
The set of $y^\star \in \mc{Y}(P,p)$ that satisfies~\eqref{eqn:wardropDef} is denoted by  $\mc{W}(\ell)$.
\end{defn}
At $y^\star$~\eqref{eqn:wardropDef}, all positive portions of the playing population distribution exclusively take actions with the lowest Q-values.
\begin{rem}
\ctwo{
MDP congestion games and stochastic games are multi-player extensions of the MDP via its linear program formulation.  
The coupling quantities between players, the \emph{population distribution} in MDP congestion games and the \emph{joint policy} in stochastic games~\cite{shapley1953stochastic}, are the primal and dual variables of the MDP~\cite[Eqn 6.9.2]{puterman2014markov}, respectively.}
\end{rem}
If $\ell$ is a continuous vector-valued function and there exists an explicit potential function $F$ satisfying $\nabla F(y) = \ell(y)$, then the MDP congestion game is a \emph{potential game}~\cite{monderer1996potential}.
\begin{proposition}\cite[Thm.1.3]{calderone2017markov} \label{prop:potential_game} \ctwo{Given} the MDP congestion game cost vector $\ell$, if a potential function $F$ satisfies
\begin{equation}
  \nabla F(y) = \ell(y), \quad F:\reals^{(T+1)\times[S]\times[A]}\mapsto \reals,\label{eqn:potential_def}  
\end{equation}
then the MDP Wardrop equilibrium is given by the optimal solution of
\begin{equation}\label{eqn:mdpcg_general}
\underset{y}{\min}\,F(y), \, \, \text{s.t. } y \in \mc{Y}(P,p).
\end{equation}
\end{proposition}
Games of form~\eqref{eqn:mdpcg_general} can be solved by convex optimization techniques~\cite{us}.
Using an MDP congestion game's potential function, we can characterize the degree of sub-optimality for any feasible population distribution within $\mc{Y}(P, p)$.
\begin{defn}[$\epsilon$-MDP Wardrop equilibrium]\label{def:epsilon_wardrop} For a game with the cost vector $\ell$,
potential function $F$~\eqref{eqn:potential_def}, MDP Wardrop equilibrium $y^\star$~\eqref{eqn:wardropDef}, \ctwo{ 
 and $\epsilon > 0$, the set of $\epsilon$-MDP Wardrop equilibria is given by}
 \begin{equation}\label{eqn:approx_WE}
\mc{W}(\ell, \epsilon) := \{ \thickhat{y}(\epsilon) \in \mc{Y}(P, p) \ | \ F\big(\thickhat{y}(\epsilon)\big) \leq F(y^\star)  + \epsilon\}.     
 \end{equation}
\end{defn}
Among cost vectors $\ell$ that have explicit potential functions, we focus on those that are strongly convex~\cite[Eqn B.6]{bertsekas1999nonlinear}.
\begin{assumption}\label{ass:strong_convexity}
The cost \ctwo{vector} $\ell$ has an explicit potential $F$~\eqref{eqn:potential_def} that is $\alpha$-strongly convex for all $y \in \mc{Y}(P, p)$.
    \[\textstyle\nabla_y \ell(y) \succeq \alpha I_{M\times M}\in \reals^{M\times M},\ M = (T+1)SA, \ \alpha > 0. \]
\end{assumption}
Assumption~\ref{ass:strong_convexity} implies congestion in \emph{all} state-action costs. To model games in which \emph{some} state-action costs are constant, we can approximate the constant costs by increasing functions with infinitesimal growth rates.

\begin{rem}
If  $\ell_{tsa}:\reals_+ \mapsto \reals \ \forall \ (t,s,a) \in [T+1]\times[S]\times[A]$ are scalar functions, then Assumption~\ref{ass:strong_convexity} implies that each $\ell_{tsa}$ strictly increases and satisfies $\alpha | y_{tsa} - y_{tsa}'|\leq |\ell_{tsa}(y_{tsa}) - \ell_{tsa}(y_{tsa}')|$. Its potential is also given by 
\begin{equation}\label{eqn:integral_potential}
   \textstyle F_0(y) ={\sum_{t,s,a}} \int_{0}^{y_{tsa}}\ell_{tsa}(u) d u.
\end{equation}
\end{rem}
For an $\epsilon$-MDP Wardrop equilibria $\thickhat{y}(\epsilon)$, 
Assumption~\ref{ass:strong_convexity} implies $\norm{\thickhat{y}(\epsilon) - y^\star}^2_2 \leq \frac{2\epsilon}{\alpha}$.

\section{Tolling for constraint satisfaction}\label{sec:constraining_game}
In this section, we formulate system-level constraints using affine population distributions inequalities and relate the inexact oracle of the tolled MDP congestion game to an $\epsilon$-MDP Wardrop equilibrium.
Affine constraints cover many design requirements for large-scale networks.
\change{As \ctwo{discussed in the introduction, meeting carbon emission goals in transportation and minimizing generator initialization costs in power grids are affine constraints on the fossil fuel vehicle population and local grid voltages, respectively.}}
\begin{defn}[Affine Constraints]\label{def:affineConstraints} The set of population distribution constraints is given by
\begin{equation}\label{eqn:constraint_def}
   \mc{C} = \textstyle\big\{y \in \reals_+^{(T+1)SA} \ | \ Ay - b \leq 0\big\} 
\end{equation}
where $A \in \reals^{C\times(T+1)SA}$, $b\in \reals^C$, and $0 \leq C < \infty$ denotes the total number of constraints imposed.
\end{defn}


Let $A_i \in \reals^{(T+1)SA}$ be the $i^{th}$ row of $A$. Instead of searching over all possible tolls, we only consider tolls of the form $\tau_i A_i\in \reals^{(T+1)SA}$ for $\tau_i \in \reals_+$. This formulation ensures that $\tau_i$ only affects the $(t,s,a)$ component of $\ell$ when $A_{i,tsa}$ is non-zero, where the \change{toll} magnitude is controlled by $\tau_i$.
We denote the toll-augmented game cost \ctwo{vector} as
\begin{equation}\label{eqn:augmented_cost}
    \ell_\tau(y) := \ell(y)  + A^\top \tau, \quad \tau \in \reals_+^C.
\end{equation}
When $\ell$ satisfies Assumption~\ref{ass:strong_convexity}, we denote $\ell_\tau$'s potential \change{as} $L(\cdot, \tau)$, such that $\nabla_y L(y, \tau) = \ell_\tau$ and $L$ augments $F$~\eqref{eqn:potential_def} as
\begin{equation}\label{eqn:constrained_lagrangian}
    L(y, \tau) = F(y)  + \tau^\top(Ay - b).
\end{equation}
Given a toll value $\tau$, the toll-augmented game $d(\tau)$ and the tolled MDP Wardrop equilibrium, $y_\tau \in \mc{W}(\ell_\tau)$, are given by 
\begin{equation}\label{eqn:dual_function_and_equilibria}
    d(\tau) = \miny L(y,\tau), \quad y_\tau \in  \argminy L(y, \tau). 
\end{equation}
\ctwo{Under cost vector~\eqref{eqn:augmented_cost}, any feasible affine population constraint will hold for large} values of $\tau$~\cite{us}. We specifically want to compute the minimum toll value to enforce $\mc{C}$~\eqref{eqn:constraint_def} on the MDP Wardrop equilibrium of the toll-augmented game.
\begin{defn}[Minimum toll value]\label{def:minimum_toll_value} \ctwo{Given a constraint set $\mc{C}$~\eqref{eqn:constraint_def}, the minimum toll value $\tau^\star \in \reals_+^{C}$ is the smallest non-negative toll which ensures that the MDP congestion game has constraint-satisfying MDP Wardrop equilibria---i.e., }
\begin{equation}\label{eqn:minimum_toll_def}
   \textstyle \tau^\star  = \min\big\{ \tau \in \reals^C_+ \ | \ \mc{W}(\ell_\tau) \subseteq \mc{C}\big\}.
\end{equation}
\end{defn}
The minimum toll value exists under the following sufficient condition~\cite{us}.
\begin{proposition}~\cite{us}:
When $\mc{C}$ is convex, $\mc{C} \cap \mc{Y}(P, p_0)$ is non-empty, and the cost vector $\ell$ satisfies  Assumption~\ref{ass:strong_convexity}, \change{a unique minimum toll value $\tau^\star$~\eqref{eqn:minimum_toll_def} maximizes $d(\tau)$}.
\begin{equation}\label{eqref:toll_lagrangian_sol}
    \tau^\star = \argmax_{\tau \in \reals_+^C}\Big[\miny L(y,\tau) \Big]= \argmax_{\tau \in \reals_+^C} \ d(\tau).
\end{equation}
\end{proposition}
When $\ell$ is known, \eqref{eqref:toll_lagrangian_sol} directly computes $\tau^\star$.
When $\ell$ is unknown, we cannot explicitly solve for either $\tau^\star$ or $d(\tau)$. 
\begin{problem}\label{prob: 1}
\change{
For MDP congestion games with unknown but strictly increasing congestion costs, find the minimum toll value $\tau^\star$~\eqref{eqn:minimum_toll_def} that ensures the resulting MDP Wardrop equilibrium $y_{\tau^\star}$~\eqref{eqn:dual_function_and_equilibria} satisfies the desired affine constraints $\mc{C}$~\eqref{eqn:constraint_def}.}
\end{problem}

\begin{figure}
    \centering
    \includegraphics[width=\columnwidth]{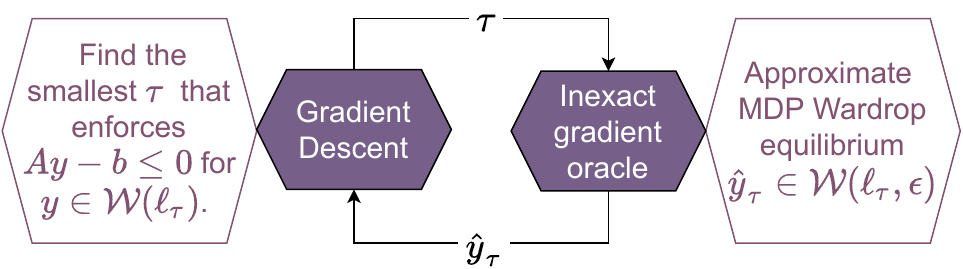}
    \caption{Using approximate MDP Wardrop equilibrium, we perform inexact gradient descent on $\tau$ to find the minimum toll value.}
    \label{fig:planner_dynamics}
\end{figure}
\change{As summarized in Figure~\ref{fig:planner_dynamics}, we compute $\tau^\star$ by querying the $\epsilon$-MDP Wardrop equilibria of $d(\tau)$ and forming an inexact oracle gradient descent.} 
To see how the $\epsilon$-MDP Wardrop equilibrium of a tolled game induces an inexact oracle for $\nabla d(\tau)$, we first derive the analytical expression of $\nabla d(\tau)$. 
\begin{proposition}\label{prop:dual_properties}
If the cost vector $\ell$ satisfies Assumption~\ref{ass:strong_convexity} and $\mc{C}$ satisfies Definition~\ref{def:affineConstraints}, $d$~\eqref{eqn:dual_function_and_equilibria} has the following properties. 
\begin{itemize}
    \item $d$ is concave.
    \item $d$ is $\thickbar{\alpha}$-smooth with $\thickbar{\alpha} = \frac{\norm{A}^2_2}{\alpha}$. I.e., for any $\sigma, \tau \in \reals^{C}$,
    \begin{equation}\label{eqn:d_smoothness}
        \textstyle d(\tau) + \nabla d(\tau)^\top(\sigma - \tau) - \frac{\thickbar{\alpha}}{2}\norm{\sigma - \tau}_2^2 \leq d(\sigma).
    \end{equation}
    \item Let $y_\tau$ be defined as~\eqref{eqn:dual_function_and_equilibria}, then $\nabla d (\tau)$ is given by
    \begin{equation}\label{eqn:dual_gradient}
       \textstyle \nabla d (\tau) = A y_\tau - b.
    \end{equation}
\end{itemize}
\end{proposition}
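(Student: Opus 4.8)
The plan is to establish the three items in the listed order, using that, under Assumption~\ref{ass:strong_convexity}, $F_0$ in~\eqref{eqn:integral_potential} is $\alpha$-strongly convex with (sub)gradient $\ell$, and that $\mc{Y}(P,p)$ is a nonempty compact polytope. Compactness I would obtain by summing the flow-balance equality in~\eqref{eqn:feasibleY} over $s$ and using $\sum_{s'}P_{ts'sa}=1$: this shows the total mass is conserved across time and equals $\ones_S^\top p$ at every $t$, so together with $y\geq 0$ the set $\mc{Y}(P,p)$ is bounded and closed. \emph{Concavity} is then immediate: for each fixed $y\in\mc{Y}(P,p)$ the map $\tau\mapsto L(y,\tau)=F_0(y)+\tau^\top(Ay-b)$ is affine, and $d$ is the pointwise infimum over $y$ of this family, hence concave on $\reals^C$.

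\emph{Gradient formula.} Since $L(\cdot,\tau)$ equals $F_0$ plus a linear term it is continuous and $\alpha$-strongly convex, so it has a unique minimizer $y_\tau$ over the compact convex set $\mc{Y}(P,p)$; because $L(\cdot,\tau)$ differs from the potential of $\ell_\tau$ only by the $y$-independent constant $-\tau^\top b$, this minimizer coincides with the MDP Wardrop equilibrium $y_\tau\in\mc{W}(\ell_\tau)$ by~\cite[Thm.1.3]{calderone2017markov}. Applying the min-version of Danskin's theorem (minimum over a compact set of a function that is affine, hence $C^1$, in $\tau$ with a unique minimizer) gives that $d$ is differentiable with $\nabla d(\tau)=\nabla_\tau L(y_\tau,\tau)=Ay_\tau-b$, which is~\eqref{eqn:dual_gradient}. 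Equivalently, concavity of $d$ together with optimality of $y_\tau$ yields $d(\sigma)\leq L(y_\tau,\sigma)=d(\tau)+(Ay_\tau-b)^\top(\sigma-\tau)$, so $Ay_\tau-b$ is always a supergradient, and uniqueness of $y_\tau$ makes it the only one.

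\emph{Smoothness.} The key step is a Lipschitz bound on the solution map $\tau\mapsto y_\tau$. I would write the first-order optimality (variational inequality) conditions for the two constrained minimizations, $\langle \ell(y_\sigma)+A^\top\sigma,\,y-y_\sigma\rangle\geq 0$ and $\langle \ell(y_\tau)+A^\top\tau,\,y-y_\tau\rangle\geq 0$ for all $y\in\mc{Y}(P,p)$, test the first with $y=y_\tau$ and the second with $y=y_\sigma$, and add them to obtain
\[\langle \ell(y_\sigma)-\ell(y_\tau),\,y_\sigma-y_\tau\rangle \;\leq\; \langle \sigma-\tau,\,A(y_\tau-y_\sigma)\rangle \;\leq\; \norm{A}_2\,\norm{\sigma-\tau}_2\,\norm{y_\sigma-y_\tau}_2.\]
Assumption~\ref{ass:strong_convexity} makes $\ell=\nabla F_0$ an $\alpha$-strongly monotone operator, so the left side is at least $\alpha\norm{y_\sigma-y_\tau}_2^2$, which forces $\norm{y_\sigma-y_\tau}_2\leq\frac{\norm{A}_2}{\alpha}\norm{\sigma-\tau}_2$. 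Combining with~\eqref{eqn:dual_gradient} gives $\norm{\nabla d(\sigma)-\nabla d(\tau)}_2=\norm{A(y_\sigma-y_\tau)}_2\leq\frac{\norm{A}_2^2}{\alpha}\norm{\sigma-\tau}_2=\thickbar{\alpha}\norm{\sigma-\tau}_2$. Finally I would apply the descent lemma to the convex $C^1$ function $-d$, whose gradient is $\thickbar{\alpha}$-Lipschitz, to get $-d(\sigma)\leq -d(\tau)-\nabla d(\tau)^\top(\sigma-\tau)+\tfrac{\thickbar{\alpha}}{2}\norm{\sigma-\tau}_2^2$, which rearranges exactly into~\eqref{eqn:d_smoothness}.

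The main obstacle I anticipate is the Lipschitz estimate for $\tau\mapsto y_\tau$: it is the one place where the polyhedral geometry of $\mc{Y}(P,p)$ and the strong monotonicity of $\ell$ must be combined, and some care is needed in invoking the variational inequality since $\mc{Y}(P,p)$ has a nontrivial boundary (so $\nabla_y L(y_\tau,\tau)$ need not vanish). The concavity and gradient-formula claims are essentially structural once compactness of $\mc{Y}(P,p)$ and uniqueness of the minimizer are in place.
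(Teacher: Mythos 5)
Your proposal is correct and follows essentially the same route as the paper: concavity as a pointwise infimum of functions affine in $\tau$, the gradient formula via uniqueness of the minimizer and Danskin's theorem (the paper cites~\cite[App.A]{bertsekas1999nonlinear}), and $\thickbar{\alpha}$-smoothness from the strong convexity of $F_0$ (the paper cites~\cite[Lem 2.1]{necoara2013rate}, whose standard proof is exactly your variational-inequality argument showing $\tau\mapsto y_\tau$ is $\tfrac{\norm{A}_2}{\alpha}$-Lipschitz). The only difference is that you work out self-contained what the paper delegates to citations, and your details (strong monotonicity of $\ell$ from Assumption~\ref{ass:strong_convexity}, compactness of $\mc{Y}(P,p)$, descent lemma on $-d$) are all sound.
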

\ctwo{See \arxiv{App.1.1.1} for proof.} When the costs $\ell$ are unknown, we can compute the $\epsilon$-MDP Wardrop equilibrium via learning algorithms~\cite{zhou2020reinforcement,krichene2015,yu2019primal}.
When applied to tolled games, these $\epsilon$-MDP Wardrop equilibria form $\epsilon$-\emph{inexact oracles} of $d$. 
\begin{defn}[$\epsilon$-inexact oracle]
The $\epsilon$-inexact oracles of $\nabla d(\tau)$ and $d(\tau)$ are given by
\begin{equation}\label{eqn:approximations_to_d}
\thickhat{\nabla}d(\tau) = A \hat{y}_{\tau}(\epsilon) - b, \quad \hat{d}(\tau) = L(\hat{y}_{\tau}(\epsilon), \tau), 
\end{equation}
where $\hat{y}_\tau(\epsilon) \in \mc{W}(\ell_\tau, \epsilon)$~\eqref{eqn:approx_WE} is an $\epsilon$-MDP Wardrop equilibrium satisfying
\begin{equation}\label{eqn:eps_tau_wardrop_eq}
    L(\hat{y}_{\tau}(\epsilon), \tau) \leq L(y_\tau, \tau) + \epsilon.
\end{equation}
\end{defn}
When $\epsilon = 0$, the oracle is exact. When $\epsilon > 0$, the oracle's \ctwo{accuracy directly affects} the concavity and smoothness of $d$ (See~\arxiv{App.1}). 


\section{Minimum toll algorithm}\label{sec:dual_ascent_algorithm}
Convergence of first-order gradient methods relies on the objective's convexity and smoothness. If an inexact gradient preserves concavity and smoothness, \ctwo{its gradient descent will also converge}~\cite{devolder2014first}. In this section, we apply the same concept to tolling in MDP congestion games and analyze on how constraint violation is affected by the $\epsilon$-MDP Wardrop equilibrium. 

\begin{algorithm}[ht!]
\caption{Iterative toll synthesis}
\begin{algorithmic}[1]
\Require \(\ell\), \(P\), \(p_s\), \(\tau_0\).
\Ensure \(\tau^N, y^N\).
\For{\(k=0,1, \ldots\)}
    \State{\( y^{k} \in \mc{W}(\ell + A^\top\tau^k, \epsilon^k)\)}\label{alg:admm_game}
    \State{\(\tau^{k+1} =[ \tau^k + \gamma^k ( A y^{k} - b)]_+\)}\label{algline:dual_ascent_tau_update}
\EndFor 
\end{algorithmic}
\label{alg:dualAscent}
\end{algorithm}

In Algorithm~\ref{alg:dualAscent}, we denote the $k^{th}$ toll charged, the $k^{th}$ $\epsilon$-MDP Wardrop equilibrium, and the $\epsilon$ in the $k^{th}$ $\epsilon$-inexact oracle as $\tau^k$, $y^k$, and $\epsilon^k$, respectively. When $\epsilon^k =0$ $\forall k \in\mathbb{N}$, Algorithm~\ref{alg:dualAscent} is a projected gradient ascent on $d(\tau)$ \ctwo{with sublinear convergence rates~\cite{bubeck2015convex}.}
We analyze Algorithm~\ref{alg:dualAscent}'s convergence when $\epsilon^k > 0$ by through the following quantities,
\begin{equation}\label{eqn:average_quantities}
    \thickbar{\tau}^k = \frac{1}{k}\sum_{s=1}^k \tau^s, \quad \thickbar{y}^k = \frac{1}{k}\sum_{s=0}^{k-1} y^s, \quad E^k = \sum_{s=0}^{k-1} \epsilon^s ,
\end{equation}
where $\thickbar{\tau}^k$/ $\thickbar{y}^k$/$E^k$  is the average toll/average $\epsilon$-MDP Wardrop equilibrium/accumulated $\epsilon$ up to iteration $k$, respectively.
\begin{thm}\label{thm:dualAscent}
If the cost vector $\ell$ satisfies Assumption~\ref{ass:strong_convexity}, and \change{$\gamma \leq \frac{\alpha}{2\norm{A}^2_2}$} for each $k\in\mathbb{N}$, then $\thickbar{\tau}^k$ from~\eqref{eqn:average_quantities} satisfies
\begin{equation}\label{eqn:toll_convergence}
 \textstyle d(\tau^\star) - d(\thickbar{\tau}^k) \leq \frac{1}{k}( \frac{1}{2\gamma}\norm{\tau^0 - \tau^\star}_2^2 + 2E^k), 
\end{equation}
where $\tau^\star$ is the minimum toll value~\eqref{eqn:minimum_toll_def}.
and $E^k$~\eqref{eqn:average_quantities} is the total approximation error.
\end{thm}
See~\arxiv{App.3} for proof. Our proof is inspired by~\cite{devolder2014first,necoara2013rate}. 
\begin{rem}
When $\epsilon^k = \epsilon$ is constant,~\eqref{eqn:toll_convergence} becomes
$\textstyle d(\tau^\star) - d(\thickbar{\tau}^k) \leq \frac{1}{k}( \frac{1}{2\gamma}\norm{\tau^0 - \tau^\star}_2^2) + 2\epsilon.$
\ctwo{Similar to exact gradient descent, $\textstyle\frac{1}{2\gamma}\norm{\tau^0 - \tau^\star}_2^2$ converges sublinearly in $k$. However, the term $2\epsilon > 0$ causes a constant convergence error.}
\end{rem}
Constraint violation of $\thickbar{y}^k$~\eqref{eqn:average_quantities} is similarly bounded. 
\begin{cor}\label{cor:constraint_violation_convergence}
If the cost vector $\ell$ satisfies Assumption~\ref{ass:strong_convexity} and \change{$\gamma \leq \frac{\alpha}{2\norm{A}^2_2}$}, then the constraint violation of the average population distribution $\thickbar{y}^k$ from~\eqref{eqn:average_quantities} satisfies 
\begin{equation}\label{eqn:constraint_violation_convergence}
\textstyle\norm{[A\thickbar{y}^{k} - b]_+}_2 \leq \frac{1}{\gamma k}\Big(\norm{\tau^\star}_2 + \norm{\tau^0 - \tau^\star}_2 + 2\sqrt{\gamma E^k}\Big).    
\end{equation}
\end{cor}
See~\arxiv{App.4} for proof.
\begin{rem}
With a constant error oracle $\epsilon^k = \epsilon$, the average constraint violation will still asymptotically reduce to zero.
\end{rem}
Unlike $\bar{\tau}^k$, $E^k$'s effect on the average constraint violation can be reduced with \emph{larger} step sizes as $2\sqrt{E^k \gamma^{-1}}$. 
\change{We note that Corollary~\ref{cor:constraint_violation_convergence} shows that Algorithm~\ref{alg:dualAscent} is not appropriate for enforcing safety-critical system constraints.}

Algorithm~\ref{alg:dualAscent} also ensures that the average population distribution $\thickbar{y}^k$~\eqref{eqn:average_quantities} converges to the optimal equilibrium for $\tau^\star$.
\begin{thm}\label{thm:primal_convergence}
If the cost vector $\ell$ satisfies Assumption~\ref{ass:strong_convexity} and \change{$\gamma \leq \frac{\alpha}{2\norm{A}^2_2}$}, then the average player population distribution given by  $\thickbar{y}^k$ \eqref{eqn:average_quantities}  satisfies
\begin{equation}\label{eqn:primal_convergence}
    \textstyle\norm{\thickbar{y}^k - y^\star}_2^2 \leq \frac{\alpha}{2\gamma k} D(\tau^0, \tau^\star, E^k),
\end{equation}
where $\tau^\star$ is the minimum toll value, $y^\star$ is the optimal population distribution for $d(\tau^\star)$, and $D(\tau^0, \tau^\star, E^k) $ is given by
\begin{multline}
   \textstyle D(\tau^0, \tau^\star, E^k) = \max\big\{\half\norm{\tau^0}_2^2 + 2E^k, \\
  \textstyle \norm{\tau^\star}_2^2 +\norm{\tau^\star}_2\norm{\tau^0 - \tau^\star}_2 + 2\sqrt{\gamma E^k} \big\}.  
\end{multline}
\end{thm}
See~\arxiv{App.2} for proof. 
\begin{rem}\change{
Similar to $\bar{\tau}^k$, convergence of $\bar{y}^k$ to $y^\star$ is sublinear in $k$ and scales linearly with $E^k$. However, convergence error due to $E^k$ can be minimized by taking \emph{larger} stepsizes.}
\end{rem}
\change{
\noindent\textbf{Fast first-order gradient method.} When $\epsilon^k = \epsilon$ for all $k \in \mathbb{N}$, the fast gradient method~\cite{devolder2014first} augments Algorithm~\ref{alg:dualAscent} with the following update after Step~\ref{algline:dual_ascent_tau_update},
\[\textstyle\tau^{k+1} =  \frac{\norm{A}_2^2}{\alpha (k+3)}\Big[\sum_{i=1}^{k+1} \sqrt{i(i+1)} (A\hat{y}^k - b)\Big]_+ + \frac{k+1}{k+3}\tau^{k+1}. \]
In large networked systems with low-accuracy inexact oracles, the fast gradient method theoretically and empirically diverges from $d(\tau^\star)$~\cite{devolder2014first}. Since its constraint violation results are comparable to Algorithm~\ref{alg:dualAscent}~\cite{necoara2013rate}, we focus on the standard first-order gradient descent instead.
}
\section{Congestion Reduction in Ride-share Networks}\label{sec:ny_rideshare}
\noindent In this section, we model competition among NYC's ride-share drivers as a MDP congestion game and apply Algorithm~\ref{alg:dualAscent} to demonstrate how ride-share companies can implicitly enforce constraints by utilizing tolls.\footnote{Code for the Manhattan ride-share MDP congestion game is available at \href{http://github.com/lisarah/manhattan_MDP_queue_game}{github.com/lisarah/manhattan\_MDP\_queue\_game}.} 
Since origin-destination-specific trip data for ride-share companies are not publicly available, we use the rider demand distribution provided by the NYC TLC as a proxy for Uber's rider demand distribution. In~\cite{taxi_frequency}, the overall rider demand for TLC is estimated to be about $40\%$ of the rider demand for Uber. 
\subsection{Ride-sharing MDP Game \ctwo{With Queues}} \label{sec:ride_share_game}
We consider a cohort of competitive ride-share drivers in Manhattan, NYC repeatedly operating between $9$ am and noon. Using six hundred thousand trip data from the yellow taxi data during January, 2019~\cite{new_york_taxi}, we model individual driver as a finite time horizon MDP \ctwo{in a queuing network}. 

\change{
\noindent\textbf{Modeling assumptions}. 1) \ctwo{All trips take discretized times of $\{15, 20, 30,\ldots\}$ based on the trip distance. 2) The initial driver distribution is uniform across all Manhattan zones.} We find that varying the initial distribution does not significantly impact the time-averaged MDP Wardrop equilibrium or the toll norm, as long the constraints are satisfied.  3) From~\cite{uber_driver_density}, the Uber driver population in NYC is approximately $50 000$. We assume that $20\%$ of the total population works in Manhattan between $9$ am and noon. 
}

\noindent\textbf{States}. \ctwo{Each state is given by $s = (z, q)$, where $z \in [63]$ is one of $63$ Manhattan zones, visualized the right plot in Figure~\ref{fig:manhattan_taxi_zones} (islands excluded), and $q \in [Q]$, where $Q = 7$ is the queue level. If $q=0$, the driver is in zone $z$ without a rider. If $q > 0$, the driver is $q$ time steps away from completing a ride to zone $z$.}  \ctwo{The set of geographically adjacent(sharing one or more edges) zones of $z$ is given by $\mc{N}(z)$}.

\noindent\textbf{Actions}. 
\ctwo{The action set of state $(z, q)$ is $q$-dependent and is given by $\mc{A}(z, q)$.} 
\ctwo{When $q > 0$, the driver is completing a ride. Therefore, the only action, $a_z$, is to finish the ride and $\mc{A}(z,q)$ is a singleton set $\mc{A}(z, q) = \{a_{z}\}, \forall z \in [63], \ q \geq 1.$
When $q = 0$, the driver can either go to a neighboring zone ($a_{z'}$) or pick up a rider in the current zone ($a_z$). The action set of $(z,0)$ is $\mc{A}(z, 0) = \big\{a_{z'} | \ z' \in \mc{N}(z) \cup \{z\}\big\}, \ \forall z \in [63]. $
We can extend the $q$-dependent action model to conform to the MDP model in Section~\ref{sec:prelim} by defining $A =   \max_{z, q}|\mc{A}(z, q)|$, and for all $(z,q)$ where $|\mc{A}(z,q)| < A$, insert $A - |\mc{A}(z,q)|$  actions with infinite costs.}

\noindent\textbf{Time.} The average trip time from the TLC data is $12.02$ minutes. \ctwo{We add buffer time for drivers to locate and drop off riders, such that the MDP time interval is} $15$ minutes between $9$ am and noon for a total of \ctwo{$T = 12$} time steps. 

\noindent\textbf{Transition Dynamics.}
\ctwo{Transition dynamics are $q$-dependent. When $q > 0$, the driver is completing a ride $(a_z)$. Then, for all $z \in [63]$ and $t \in [T]$, the transition dynamics of $(z,q, a_z)$ is given by
\begin{equation}
    P(t, s', a_{z}, z, q)  = \begin{cases}
    1 & s' = (z, q-1) \\
    0 & \text{otherwise}
    \end{cases}, \forall \  q \geq 1.
\end{equation}
When $q = 0$, the driver may go to an adjacent zone $(\{a_{z'} |z' \in \mc{N}(z)\} )$ or pick up a rider $(a_z)$. For $a_{z'}$, the transition dynamics is given by
\begin{equation}
    P(t, s', a_{z'}, z, q) =
    \left\{\begin{array}{ll}
  1 - \delta, & \text{if}\ s = (z', 0),\\
  \frac{\delta}{| \mc{N}(z) | - 1}, & \text{if}\ s = (\bar{z}, 0), \bar{z} \in \mc{N}(z)/\{{z}\},\\
  0,& \text{otherwise},
  \end{array}\right. 
\end{equation}}
where $\delta \in [0,1)$ models the driver's probability of real-time deviation from a chosen strategy. We set $\delta = 0.01$.

\ctwo{For action $a_z$ from state $(z, 0)$, drivers will find a ride and transition to the appropriate queue in the destination zone.} 
The transition dynamics for $(z,q, a_z, t)$ is derived using the TLC ride demand distribution at \ctwo{$(z,0)$~\cite{new_york_taxi}. Let $N(z, z', q, t)$ be the number of trips with origin-destination $(z, z')$ at time step $t$ that took between $15q$ and $15(q+1)$ minutes. For all $z' \in [63]$ and $q \in [Q]$ at time $t\in [T]$, the transition probability to state $(z', q)$ is given by
\[ P((z', q), (z, 0), a_{z}, t) = \frac{N(z, z', q, t)}{\sum_{\bar{q} \in [Q]}\sum_{\bar{z} \in [63]} N(z, \bar{z}, \bar{q}, t)}.\]}
\ctwo{
Note that $z'$ need not be an adjacent zone to $z$.}

\noindent\textbf{Driver costs}. \ctwo{Driver costs are $q$-dependent at each state $s = (z, q)$. When $q > 0$, the driver cost is given by
\[\ell_{tsa}(y_{tsa}) = \beta y_{tsa}, \ \forall \ t \in [T], \ a \in \mc{A}(z,q),\]
where $\beta$ models the minor congestion effect of drivers entering zone $z$ at queue level $q$. We set  $\beta = 0.001$.} 

\ctwo{When $q = 0$, we follow the cost model in~\cite{us}, given by}
\begin{align}\label{eqn:0_queue_cost}
\ell_{tsa}(y_{tsa}) & =\mathbb{E}_{s'} \left[c_{ts's}^\text{trav} - m_{ts's} \right] + c_{t}^\text{wait}\cdot y_{tsa} \\
& = \textstyle\sum_{s'}P_{ts'sa}\left[ c_{ts's}^\text{trav} -m_{ts's}\right] + c_{t}^\text{wait}\cdot y_{tsa} \notag
\end{align}
\ctwo{The parameters in~\eqref{eqn:0_queue_cost} are action-dependent: $m_{ts's}$ is the monetary reward, $c^{wait}_t$ is the congestion scaling coefficient, and $c_{ts's}^\text{trav}$ is the fuel cost.
\begin{enumerate}
    \item For $a_{z'}$ and $s' = (z', 0)$,  $m_{ts's} = 0$ and $c^{wait}_t = 0.01$.  The term $c_{ts's}^\text{trav}$ is the fuel cost of reaching $z'$, given by
    \begin{equation}\label{eqn:travel_cost}
        c_{ts's}^\text{trav} = \mu
        \underbrace{d_{zz'}}_{\text{mi}}
        \underbrace{\big(\text{Vel}\big)^{-1}}_{\text{hr}/\text{mi}} + 
        \underbrace{\big(\substack{\text{Fuel Price} } \big)}_{\$/\text{gal}}
        \underbrace{\big( \substack{\text{Fuel Eff}} \big)^{-1}}_{\text{gal}/\text{mi}} 
        \underbrace{d_{zz'}}_{\text{mi}}. 
    \end{equation}
    The parameter $d_{zz'}$ is the estimated trip distance between $z$ and $z'$. When $z' = z$, $d_{zz}$ is the average distance (mi) for all  $(z, z)$ trips from the TLC data. When $z' \neq z$, $d_{zz'}$ is the Haversine distance (mi) between $z$ and $z'$. The parameter $\mu$ is a time-money tradeoff parameter, given in Table~\ref{tab:params} along with other parameters. 
    \begin{table}[H]
        \begin{center}
        \centering
        \begin{tabular}{|cccc|}
        \hline
        $\mu$  & Velocity & Fuel Price & Fuel Eff   \\
        \hline
        \$15 /mi & 8 mph & \$2.5/gal & 20 mi/gal \\ \hline
        \end{tabular}
        \centering
        \caption{Parameters for the driver cost function.}
        \label{tab:params}
        \end{center}
    \end{table}
    \item For $a = a_z$ and $s' = (z', 0)$, $m_{ts's}$ is the monetary reward, defined using Uber's NYC pay rate~\cite{uber_pricing}  as 
    \begin{equation}\label{eqn:trip_costs}
     \textstyle m_{ts's}  = \max\Big( \$ 7, \$ 2.55  + \$ 0.35 \cdot \Delta t+ \$ 1.75 \cdot \Delta d\Big),
    \end{equation}
    where $\Delta t$ is the trip time (min) and $\Delta d$ is the trip distance (mi). We set $\Delta t = 12$ as the average trip time from the TLC data and $\Delta d$ as the estimated Haversine distance between $(z, z')$. The parameter $c_{t}^\text{wait}$ is the coefficient of congestion, scaled linearly by the portion of drivers who are waiting for a rider, and is given by
    \begin{align}
       \textstyle c^\text{wait}_{tsa}  = \underset{s'}{\mathbb{E}}\left[m_{ts's}\right] \cdot 
        \Big(
        \underbrace{
        \substack{\text{Customer Demand Rate} }
        }_{\text{rides}/\Delta t}
        \Big)^{-1},
        \label{eq:travel}
    \end{align}
    where $m_{ts's}$ is given by~\eqref{eqn:trip_costs}  and the customer demand rate is derived from TLC data per time interval per day. We estimate the Uber ride demand to be 2.5 times more than the Yellow Taxi's ride demand in January 2019~\cite{uber_vs_taxi_demand} and scale the TLC data accordingly. 
\end{enumerate}}
\subsection{Online learning via conditional gradient descent}
\ctwo{When drivers optimize their strategies for the ride-share game in Section~\ref{sec:ride_share_game}, we assume that they cannot directly access the congestion costs model and transition dynamics.} Instead, they collectively receive costs for a chosen joint policy, and iterate to find the equilibrium policy. 

We implement the learning method from~\cite[Alg.3]{us}. Inspired by conditional gradient descent (Frank-Wolfe), \cite[Alg.3]{us} implicitly enforces $y \in \mc{Y}(P, p_0)$ by solving linearized game potentials~\eqref{eqn:integral_potential} via dynamic programming. Frank-Wolfe converges rapidly to low-accuracy solutions. Based on Frank-Wolfe's stopping criterion, the $\epsilon$ in the $\epsilon$-MDP Wardrop equilibrium is given by 
\begin{equation}\label{eqn:manhattan_stopping_criterion}
  \textstyle\epsilon^k = \big(\ell(y^k) + A^\top\tau^k\big)^\top(y^k - y^{k+1}).
\end{equation} 
\begin{figure}
    \centering
    \includegraphics[width=1.0\columnwidth]{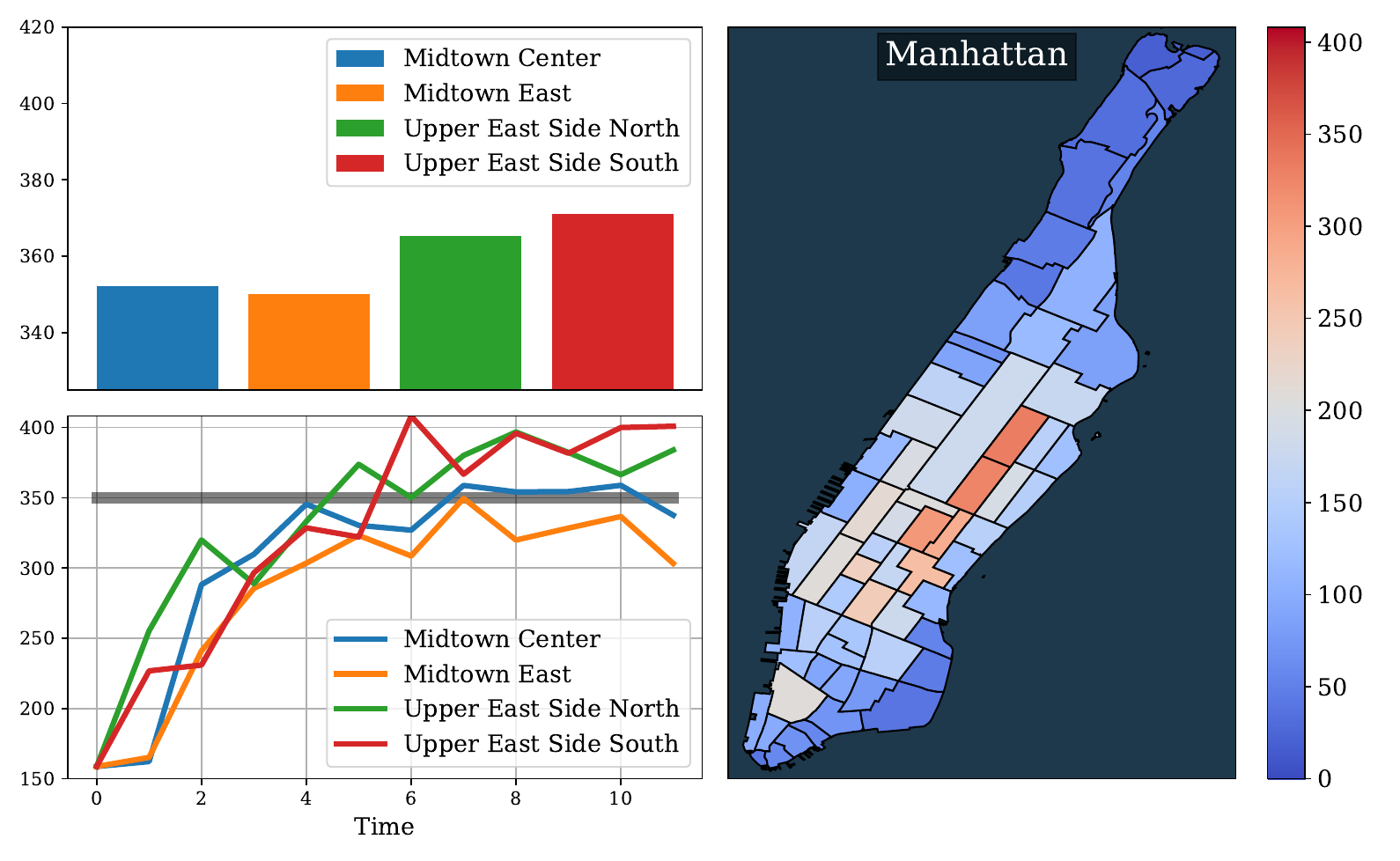}
    \caption{Predicted ride-share traffic in Manhattan.}
    \label{fig:manhattan_taxi_zones}
\end{figure}
We set $\epsilon^k = 1e^3$, which is approximately equal to a normalized error of $0.5\%$ for the unconstrained game potential. The corresponding $\epsilon$-MDP Wardrop equilibrium and the driver densities of the most congested zones are shown in Figure~\ref{fig:manhattan_taxi_zones}. 
\subsection{Reducing driver presence in congested taxi zones}
Suppose the ride-share company wishes to reduce \ctwo{the driver density in congested zones to below \ctwo{$350$} per zone per time step via Algorithm~\ref{alg:dualAscent}.} This constraint can be formulated as
\begin{figure}
    \centering
    \includegraphics[width=\columnwidth]{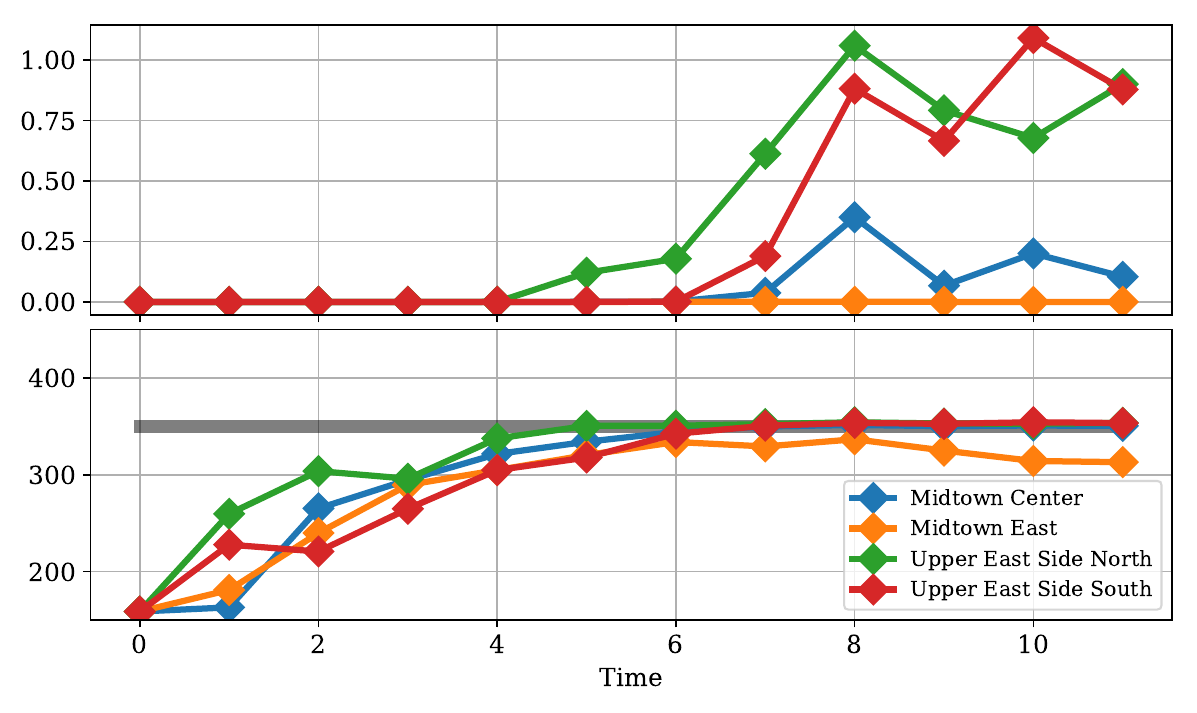}
    \caption{Manhattan game under constraint~\eqref{eqn:manhattan_constraints}. The top line plot shows congestion tolls (\$/$15$min). The bottom line plot shows the congested driver distributions (drivers/$15$min). }
    \label{fig:manhattan_tolled}
\end{figure}
\begin{equation}\label{eqn:manhattan_constraints}
    \textstyle \sum_{a} y_{tsa} \leq 350, \ s = (z, 0), \ \forall \ (t, z) \in [T+1]\times[63].
\end{equation}
Each $(t,z)\in [T+1]\times[63]$ corresponds to a constraint $A_i \in \reals^{(T+1)SA}$, where for all $a \in \mc{N}(z, 0)$, the $(t,(z,0),a)^{th}$ entry is $1$ and all other entries are $0$. Thus, we enforce a total of $63 \times 12 = 1008$ constraints of form~\eqref{eqn:manhattan_constraints}. The constraint matrix is $\textstyle A = [A_1, \ldots, A_{(T+1)S}]^\top \in \reals^{(T+1)S\times (T+1)SA}$. 
\subsection{Discussion}
We run Algorithm~\ref{alg:dualAscent} for $2000$ iterations at $\epsilon^k = 0.5\%$ of the unconstrained potential value. The results are shown in Figure~\ref{fig:manhattan_tolled}. The resulting constraint violation has $2$ norm $10.24$ for the whole time horizon. The tolls $2$ norm is $2.94$.

In Figure~\ref{fig:manhattan_tolled}(top), we see that for tolls around $\$1$ per time step per state, we can decrease the average constraint violation from over $200$ drivers to less than $10$ drivers for whole time horizon. This is comparable to the proposed toll value for lower Manhattan ($\$2.25$ per entry)~\cite{nyc_congestion_toll}. \ctwo{In Figure~\ref{fig:manhattan_system_analysis} left, we evaluate Algorithm~\ref{alg:dualAscent} by its toll value convergence and the constraint violation during tolling. Note that the average constraint violation $\norm{[A\thickbar{y}^k - b]_+}_2$ differs from the last-iterate constraint violation $\norm{[A{y}^k - b]_+}_2$, and the last iterate constraint violation does not converge in part due to drivers' imprecision in finding equilibrium strategy. . }  

\ctwo{\textbf{Social cost}.} A major concern is the effects of tolling on the average drivers' earnings, measured by the \emph{social cost}~\cite{us}. When the social cost increases significantly, drivers may quit,  thus reducing the ride-share workforce.
We show empirically in Figure~\ref{fig:manhattan_system_analysis} (right) that tolling does not significantly impact driver earnings: the average driver earnings during the tolling process are normalized against the untolled average earnings. \ctwo{During tolling,  the social cost in fact decreased, implying that the average driver earnings increased. Therefore, congestion-based tolling is unlikely to cause quitting among the driver population.} 
\begin{figure}
\center
\includegraphics[width=\columnwidth]{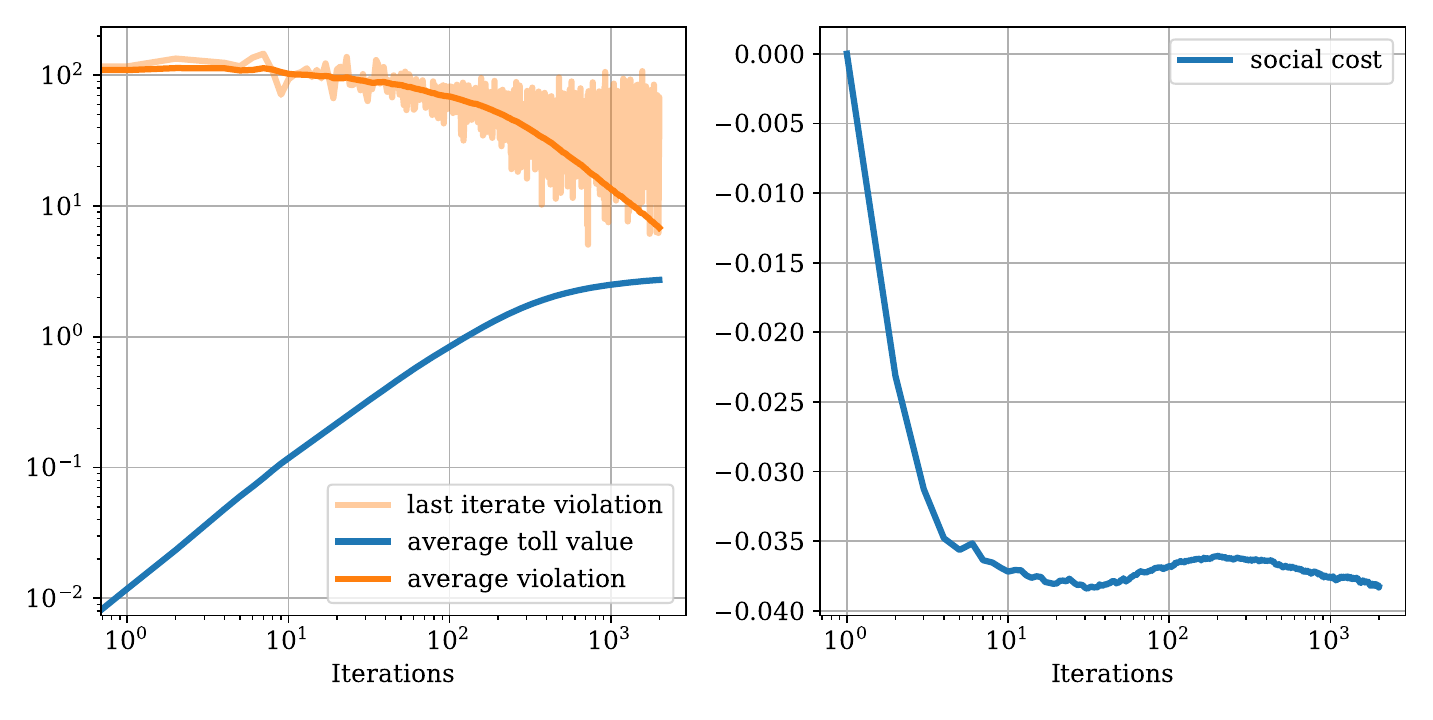}
\caption{Left: average toll value and constraint violation during toll synthesis. Right: average driver earnings during toll synthesis.}
    \label{fig:manhattan_system_analysis}
\end{figure}

\ctwo{\textbf{Equilibrium accuracy}.} The effect of $\epsilon^k$ on the toll value $\norm{\thickbar{\tau}}_2$ and the average constraint violation $\norm{[A\thickbar{y}^k - b]_+}_2$ is shown in  Figure~\ref{fig:epsilon_analysis} after $1000$ iterations for \ctwo{$\epsilon = [100, 1000, 5000, 10000, 50000]$.} The increased accuracy in $\epsilon$ decreases both the toll value and the constraint violation during the tolling process, thus providing more incentive to accurately compute the minimum toll value. 
\begin{figure}
\center
\includegraphics[width=\columnwidth]{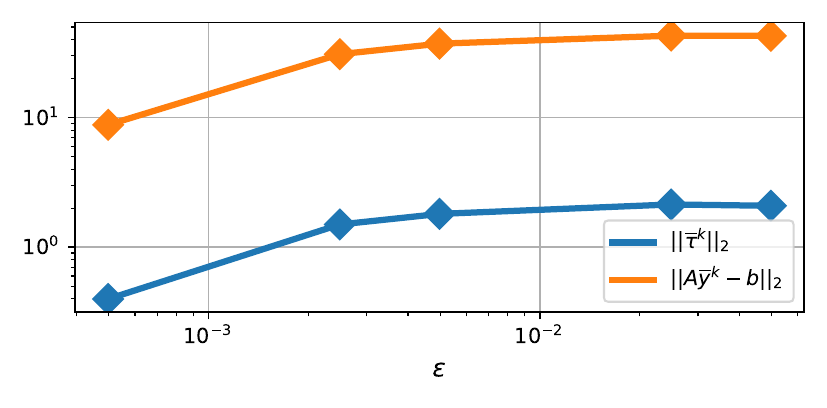}
\caption{$\epsilon$ vs average toll and constraint violation at $k=1000$.}
\label{fig:epsilon_analysis}
\end{figure}
\section{Conclusion}
We presented an iterative tolling method that allows system-level operators to enforce constraints on a MDP congestion game with unknown congestion costs. We showed that an $\epsilon$-MDP Wardrop equilibrium corresponds to an inexact gradient oracle of the tolled game, and derived conditions for convergence of the inexact gradient descent problem. We applied our results the ride-share system in Manhattan, NYC. Future extensions to this work include extending the toll synthesis method to work for general convex constraints.

\bibliographystyle{plain}        
\bibliography{reference}           
\appendix
\setcounter{equation}{0}
\renewcommand\theequation{A.\arabic{equation}}
\section{Appendix}
\subsection{Properties of the tolled game}
To supplement our main results, we provide concavity and smoothness properties of the tolled game $\hat{d}(\tau)$ with respect to $\epsilon$-MDP Wardrop equilibria $\hat{y}_\tau(\epsilon)$.
\begin{lem}[Concavity]\label{lem:approx_concavity}
Under Assumption~\ref{ass:strong_convexity}, all $\epsilon$-MDP Wardrop equilibria $\hat{y}_{\tau}(\epsilon)$ given by~\eqref{eqn:eps_tau_wardrop_eq} will generate $\epsilon$-inexact oracles~\eqref{eqn:approximations_to_d} that satisfy
\[d(\sigma) \leq \hat{d}(\tau) + \thickhat{\nabla} d(\tau)^\top(\sigma - \tau), \quad \forall \ \sigma, \ \tau \in \reals^C_+. \]
\end{lem}
\begin{pf}
We denote $\hat{y}_{\tau}(\epsilon)$ by $\hat{y}_\tau$ for simplicity.
Since \(\hat{y}_{\tau}\in\mc{W}(\ell_\tau, \epsilon)\subset\mc{Y}(P, p_0)\), using \eqref{eqn:dual_function_and_equilibria} we can show that
\begin{equation}\label{eqn:lem1 eq 1}
    d(\sigma)\leq L(\hat{y}_{\tau}, \sigma).
\end{equation}
Combining~\eqref{eqn:lem1 eq 1} with the fact that \(L(\hat{y}_{\tau}, \sigma)=L(\hat{y}_{\tau}, \tau)+\thickhat{\nabla}d(\tau)^\top (\sigma-\tau)\), we obtain Lemma~\ref{lem:approx_concavity}.
\end{pf}

\begin{lem}[$\epsilon$-approximate smoothness]\label{lem:approx_smoothness} Under Assumption~\ref{ass:strong_convexity}, all $\epsilon$-MDP Wardrop equilibria $\hat{y}_{\tau}(\epsilon)$~\eqref{eqn:eps_tau_wardrop_eq} will generate inexact oracles~\eqref{eqn:approximations_to_d} that satisfy
\begin{multline}\label{eqn:pseudo_smoothness}
\hat{d}(\tau) + \thickhat{\nabla} d(\tau)^\top(\sigma - \tau) -   \textstyle \frac{\norm{A}_2^2}{\alpha} \norm{\sigma - \tau}_2^2\leq d(\sigma) + 2\epsilon,\\
\change{\forall \ \tau, \sigma \in \reals^{C}}.
\end{multline}
\end{lem}
\begin{pf}
We denote $\hat{y}_{\tau}(\epsilon)$ by $\hat{y}_\tau$ for simplicity and recall $y_\tau$ from~\eqref{eqn:dual_function_and_equilibria}. From Proposition~\ref{prop:dual_properties}, we know that
\begin{equation}\label{eqn: lem1 eqn1}
    \nabla d(\tau)=\thickhat{\nabla}d(\tau)+A(y_\tau-\hat{y}_\tau).
\end{equation}
Substituting \eqref{eqn: lem1 eqn1} into \eqref{eqn:d_smoothness}, we obtain the following
\begin{equation}\label{eqn: lem1 eqn2}
   \begin{aligned}
   0 \leq &  d(\sigma) - d(\tau) - \textstyle\thickhat{\nabla}d(\tau)^\top(\sigma - \tau) +  \frac{\norm{A}_2^2}{2\alpha}\norm{\sigma - \tau}_2^2\\
    &- \big(A(y_\tau - \hat{y}_\tau)\big)^\top(\sigma - \tau). 
\end{aligned} 
\end{equation}
Furthermore, we can show
\begin{equation}\label{eqn: lem1 eqn3}
   \begin{aligned}
    \Big|\big(A(y_\tau - \hat{y}_\tau)\big)^\top(\sigma - \tau)\Big| \leq &\norm{\hat{y}_\tau - y_\tau}_2\cdot\norm{A}_2 \cdot\norm{\sigma - \tau}_2 \\
    \leq & \textstyle\frac{\alpha}{2}\norm{\hat{y}_\tau - y_\tau}_2^2 +  \frac{\norm{A}_2^2}{2{\alpha}}\norm{\sigma - \tau}_2^2,
\end{aligned} 
\end{equation}
where the first step is due to the Cauchy–Schwarz inequality, and the second step is due to the inequality of arithmetic and geometric inequalities.

Next, we note that $F$~\eqref{eqn:potential_def} and subsequently $L(y, \tau)$~\eqref{eqn:constrained_lagrangian} is strongly convex under Assumption~\ref{ass:strong_convexity}. We combine this with the fact that $L(y_\tau, \tau)=d(\tau)$ from~\eqref{eqn:dual_function_and_equilibria} to obtain 
\begin{equation}\label{eqn: lem1 eqn4}
    \textstyle\frac{\alpha}{2}\norm{\hat{y}_\tau - y_\tau}_2^2\leq L(\hat{y}_\tau, \tau)-d(\tau).
\end{equation}
From~\eqref{eqn:eps_tau_wardrop_eq}, $\hat{y}_\tau$ satisfies
\begin{equation}\label{eqn: lem1 eqn5}
    L(\hat{y}_\tau, \tau) - d(\tau) \leq \epsilon.
\end{equation}
Summing up~\eqref{eqn:eps_tau_wardrop_eq},~\eqref{eqn: lem1 eqn2}, \eqref{eqn: lem1 eqn3}, \eqref{eqn: lem1 eqn4}, and $2\times$\eqref{eqn: lem1 eqn5}, we obtain~\eqref{eqn:pseudo_smoothness}, which completes the proof. 
\end{pf}
\begin{lem}\label{lem:residue_inequality}
Under Assumption~\ref{ass:strong_convexity}, if \change{$\gamma\leq \textstyle\frac{\alpha}{2\norm{A}_2^2}$}, $\tau^s$ from Algorithm~\ref{alg:dualAscent} satisfies
\begin{equation}\label{eqn:residue_inequality}
\begin{aligned}
    \norm{\tau^{s+1} - \tau}_2^2 \leq &\norm{\tau^s - \tau}_2^2 + 2\gamma\big(d(\tau^{s+1}) - L({y}^s, \tau^s) + 2\epsilon^s \\
   & + \thickhat{\nabla}d(\tau^s)^\top(\tau^s - \tau)\big), \quad \forall \ \tau \in \reals^C_+, \ k \geq 0.
\end{aligned}
\end{equation}
\end{lem}
\begin{pf}
Given $\tau \in \reals_+^C$, let $r^s = \norm{\tau^s - \tau}_2^2$. We compute $r^{s+1} - r^s$ using the law of cosine as  
\begin{equation}\label{eqn: lem3 eqn1}
\begin{aligned}
 r^{s+1}- r^s = &  2(\tau^{s+1} - \tau^s)^\top(\tau^{s+1} - \tau) - \norm{\tau^{s+1} - \tau^s}_2^2.
\end{aligned}
\end{equation}
From line~\ref{algline:dual_ascent_tau_update} of Algorithm~\ref{alg:dualAscent},   \(\tau^{s+1} = [\tau^s+\gamma(Ay^s-b)]_+\). Using~\cite[Lem 3.1]{bubeck2015convex}, the projection  onto \(\mathbb{R}_+^C\) implies that
\begin{equation}\label{eqn: lem3 eqn2}
    0\leq (\tau^s+\gamma\thickhat{\nabla }d(\tau^s)-\tau^{s+1})^\top(\tau^{s+1}-\tau)
\end{equation}
From~\eqref{eqn: lem3 eqn2}, we can upper bound $(\tau^{s+1} - \tau^s)^\top(\tau^{s+1} - \tau) $ and combine with~\eqref{eqn: lem3 eqn1} to obtain
\begin{equation}\label{eqn: lem 3 eqn 3}
\begin{aligned}
& \textstyle r^{s+1} - r^s  \leq \textstyle  2\gamma\thickhat{\nabla} d(\tau^s)^\top(\tau^{s+1} - \tau) - \norm{\tau^{s+1} - \tau^s}_2^2
\end{aligned}
\end{equation}
From Lemma~\ref{lem:approx_smoothness}, we recall
\begin{equation}\label{eqn: lem 3 eqn 4}
\begin{aligned}
  & L(y^s, \tau^s) - d(\tau^{s+1})  - 2\epsilon^s \\
  \leq  &d(\tau^s)^\top(\tau^{s}-\tau^{s+1})  +\textstyle\frac{\norm{A}_2^2}{\alpha}\norm{\tau^{s+1} - \tau^s}_2^2 .
\end{aligned}
\end{equation}
We can then combine~\eqref{eqn: lem 3 eqn 3} and $2\gamma\times$\eqref{eqn: lem 3 eqn 4} to derive 
\begin{equation}\label{eqn: lem 3 eqn 5}
\begin{aligned}
   &r^{s+1} - r^s + 2\gamma( L(y^s, \tau^s) - d(\tau^{s+1})  - 2\epsilon^s )\\
   \leq  &2\gamma \thickhat{\nabla}d(\tau^s)^\top(\tau^{s}-\tau) + (\textstyle\frac{2\norm{A}_2^2}{\alpha}\gamma - 1) \norm{\tau^{s+1} - \tau^s}_2^2,
\end{aligned}
\end{equation}
and use the fact that $\gamma\textstyle\frac{2\norm{A}_2^2}{\alpha} \leq 1$ to eliminate the $\norm{\tau^{s+1} - \tau^s}_2^2$ term and complete the proof.
\end{pf}
\subsubsection{Proof of Proposition~\ref{prop:dual_properties}}\label{app:dual_property_proof}
$d(\tau)$ is the dual function of the optimization problem
\[\miny F(x) \text{ s.t. } Ay \leq b.\]
As the dual function of a convex optimization with linear constraints, it is concave~\cite[Prop 5.1.2]{bertsekas1999nonlinear}.  The smoothness constant of $d(\tau)$ follows from~\cite[Thm 1]{nesterov2005smooth},
where $\alpha$ is the strong convexity factor of $F_0$. Finally, the computation of $\nabla d(\tau)$ follows directly from~\cite[Prop.B.25]{bertsekas1999nonlinear}.
\subsection{Proof of Theorem~\ref{thm:primal_convergence}}
Let $r^s= \norm{\tau^s - \tau^\star}_2^2$. From Lemma~\ref{lem:residue_inequality}, \change{when $\gamma \leq \frac{\alpha}{2\norm{A}^2_2}$}, we have
\begin{align} \label{eqn: thm1 eqn1}
 &\textstyle r^{s+1} \leq   \\
 &\textstyle r^s + 2\gamma\big(d(\tau^{s+1}) - L(y^s, \tau^s)  + 2\epsilon^s + \thickhat{\nabla} d(\tau^s)^\top(\tau^{s} - \tau^\star)\big) \label{eqn:proof_convergence_3}\nonumber
\end{align}
From Lemma~\ref{lem:approx_concavity}, we have
\begin{equation}\label{eqn: thm1 eqn3}
    \textstyle \thickhat{\nabla} d(\tau^s)^\top (\tau^s-\tau^\star)\leq L(y^s, \tau^s)-d(\tau^\star)
\end{equation}
Summing up  \eqref{eqn: thm1 eqn1} and \(2\gamma\times\)\eqref{eqn: thm1 eqn3}, we obtain
\begin{equation}\label{eqn: thm1 eqn4}
    r^{s+1}-r^s\leq  2\gamma (d(\tau^{s+1})-d(\tau^\star)+2\epsilon^s)
\end{equation}
Summing over \eqref{eqn: thm1 eqn4} for $s= 0\ldots, k-1$, we obtain $0 \leq r^k \leq r^0 -2\gamma \sum_{s=1}^{k} \big(d(\tau^\star) - d(\tau^{s})\big) +  4\gamma\sum_{s=0}^{k-1} \epsilon^s$. 
Finally, the concavity of $d$ from Proposition~\ref{prop:dual_properties} implies that  
$-kd(\thickbar{\tau}^k) = -kd(\sum_{s=1}^k \tau^s)\leq -\sum_{s=1}^kd(\tau^s)$. This completes the proof.
\subsection{Proof of Theorem~\ref{thm:dualAscent}}\label{app:a3}
We bound the term $F(\thickbar{y}^k) - F(y^\star)$. First consider the upper bound.
From Lemma~\ref{lem:residue_inequality}, let $\tau = 0$,
\begin{equation}\label{eqn:proof_y_convergence_1}
\begin{aligned}
    \textstyle  \norm{\tau^{s+1}}_2^2 \leq \norm{\tau^s}_2^2 +2\gamma \big(&d(\tau^{s+1}) + 2\epsilon^s - L(y^s, \tau^s) \\ 
      &\textstyle + \thickhat{\nabla}d(\tau^s)^\top\tau^s\big).  
\end{aligned}
\end{equation}
Recall from~\eqref{eqn:dual_gradient} and~\eqref{eqn:constrained_lagrangian}, $\thickhat{\nabla}d(\tau^s) = Ay^s - b$ and $L(y^s, \tau^s) = F(y^s) + (\tau^s)^\top(Ay^s - b)$. Therefore $L(y^s, \tau^s) - \thickhat{\nabla}d(\tau^s)^\top\tau^s = F(y^s)$. Then~\eqref{eqn:proof_y_convergence_1} becomes
\begin{equation}
\norm{\tau^{s+1}}_2^2 + 2\gamma(F(y^s) - d(\tau^{s+1})) \leq \norm{\tau^s}_2^2 + 4\gamma \epsilon^s
\end{equation}
Summing over $s = 0,\ldots k-1$, $\sum_{s=0}^{k-1} F(y^s)- d(\tau^{s+1}) \leq \frac{1}{2\gamma}\norm{\tau^0}_2^2 + 2E^k$.
Taking the average $\thickbar{y}^k$ and noting that $d(\tau^k) \leq d(\tau^\star) = F(y^\star)$ for all $\tau^k \in \reals^C_+$,
\begin{equation}\label{eqn:proof_primal_lowerbound}
    \textstyle F(\thickbar{y}^k)- F(y^\star) \leq \frac{1}{2\gamma k}\norm{\tau^0}_2^2 +\frac{2E^k}{k}.
\end{equation}
Next, consider the lower bound of $F(\thickbar{y}^k) - F(y^\star)$. By definition, $y^\star$ solves $\miny F(y) + (Ay - b)^\top\tau^\star$ where $(Ay^\star - b)^\top\tau^\star = 0$. This implies that $F(y^\star) \leq L(\thickbar{y}^k, \tau^\star)$. We expand $L(\thickbar{y}^k, \tau^\star)$ with~\eqref{eqn:constrained_lagrangian} to obtain
\[\textstyle F(y^\star) - F(\thickbar{y}^k) \leq  (A\thickbar{y}^k - b)^\top \tau^\star \leq [A\thickbar{y}^k - b]_+^\top\tau^\star.\]
We can then bound  the difference $\textstyle F(y^\star) - F(\thickbar{y}^k)$ by $\norm{\tau^\star}_2\norm{[A\thickbar{y}^k - b]_+}_2$. From Corollary~\ref{cor:constraint_violation_convergence},
\begin{equation} \label{eqn:proof_primal_upperbound}
    \textstyle F(y^\star) - F(\thickbar{y}^k) \leq \frac{\norm{\tau^\star}_2}{\gamma k}(\norm{\tau^\star}_2 + \norm{\tau^0 - \tau^\star}_2 + 2\sqrt{\gamma E^k}).
\end{equation} 
Together, \eqref{eqn:proof_primal_lowerbound} and \eqref{eqn:proof_primal_upperbound} imply
\begin{equation}\label{eqn:proof_primal_finalbound}
    \textstyle | F(y^\star) - F(\thickbar{y}^k)| \leq \frac{1}{\gamma k}D(\tau^\star, \tau^0, E^k)
\end{equation}
Strong convexity of $F$ follows from Assumption~\ref{ass:strong_convexity}, such that  $\norm{\thickbar{y}^k - y^\star}_2^2 \leq \frac{\alpha}{2}|F(\thickbar{y}^k )- F(y^\star)|$. This combined with~\eqref{eqn:proof_primal_finalbound} completes the proof.

\subsection{Proof of Corollary~\ref{cor:constraint_violation_convergence}}
We first derive an upper bound for $\norm{\tau^{k}}_2$ and then bound the left hand side of~\eqref{eqn:constraint_violation_convergence} by $\norm{\tau^{k}}_2$. Recall~\eqref{eqn: thm1 eqn4}, we use $d(\tau^\star) - d(\tau^{k})\geq 0$ to derive $r^{s+1} \leq r^s + 4\gamma \epsilon^s$. Summing over $s = 0,\ldots, k-1$, we have
\begin{equation}\label{eqn: cor1 eqn1}
   \textstyle  \norm{\tau^{k} - \tau^\star}_2^2 \leq \norm{\tau^{0} - \tau^\star}_2^2 + \textstyle4\gamma E^k.
\end{equation}
Taking the square root of both sides of~\eqref{eqn: cor1 eqn1} and noting the identity $\sqrt{a + b} \leq \sqrt{a} + \sqrt{b}$, we obtain
\begin{equation}\label{eqn: cor1 eqn2}
    \norm{\tau^{k} - \tau^\star}_2 \leq \norm{\tau^{0} - \tau^\star}_2 + \textstyle\sqrt{4\gamma E^k}.
\end{equation}
We add $\norm{\tau^\star}_2$ to both sides of~\eqref{eqn: cor1 eqn2} and use the triangle inequality $\norm{\tau^k}_2 \leq \norm{\tau^{k} - \tau^\star}_2 + \norm{\tau^\star}_2 $ to obtain
\begin{equation}\label{eqn: cor1 eqn3}
    \norm{\tau^{k}}_2 \leq \norm{\tau^\star}_2 + \norm{\tau^{0} - \tau^\star}_2 + \textstyle\sqrt{4\gamma E^k}.
\end{equation}
Next, we bound $\norm{[A\thickbar{y}^k - b]_+}_2$ using $\norm{\tau^{k}}_2$. From line~\ref{algline:dual_ascent_tau_update} of Algorithm~\ref{alg:dualAscent}, $\tau^{s+1} \geq \tau^s + \gamma(Ay^s - b)$. We sum over $s = 0,\ldots, k-1$ to obtain  $\tau^{k} \geq \tau^0 + \gamma k (A\thickbar{y}^k - b)$. Noting $\tau^{0} \in \reals_+^C$ can be dropped, $\gamma k [A\thickbar{y}^k - b]_+  \leq  \tau^{k}$ combined with~\eqref{eqn: cor1 eqn3} completes the proof.

\subsection{Proof of Proposition~\ref{prop:dual_properties}}\label{sec:dual_function_property}
\begin{pf}
$d(\tau)$ is the dual function of the optimization problem
\[\miny F(x) \text{ s.t. } Ay \leq b.\]
As the dual function of a convex optimization with linear constraints, it is concave~\cite[Prop 5.1.2]{bertsekas1999nonlinear}.  The smoothness constant of $d(\tau)$ follows from~\cite[Thm 1]{nesterov2005smooth},
where $\alpha$ is the strong convexity factor of $F_0$. Finally, the computation of $\nabla d(\tau)$ follows directly from~\cite[Prop.B.25]{bertsekas1999nonlinear}.
\end{pf}
\subsection{Proof of Lemma~\ref{lem:approx_concavity}}\label{app:lemma_approx_concavity_proof}
\begin{pf}
We denote $\hat{y}_{\tau}(\epsilon)$ by $\hat{y}_\tau$ for simplicity.
Since \(\hat{y}_{\tau}\in\mc{W}(\ell_\tau, \epsilon)\subset\mc{Y}(P, p_0)\), using \eqref{eqn:dual_function_and_equilibria} we can show that
\begin{equation}\label{eqn:lem1 eq 1}
    d(\sigma)\leq L(\hat{y}_{\tau}, \sigma).
\end{equation}
Combining~\eqref{eqn:lem1 eq 1} with the fact that \(L(\hat{y}_{\tau}, \sigma)=L(\hat{y}_{\tau}, \tau)+\thickhat{\nabla}d(\tau)^\top (\sigma-\tau)\), we obtain Lemma~\ref{lem:approx_concavity}.
\end{pf}

\subsection{Proof of Lemma~\ref{lem:approx_smoothness}}\label{app:lemma_approx_smoothness_proof}
\begin{pf}
We denote $\hat{y}_{\tau}(\epsilon)$ by $\hat{y}_\tau$ for simplicity and recall $y_\tau$ from~\eqref{eqn:dual_function_and_equilibria}. From Proposition~\ref{prop:dual_properties}, we know that
\begin{equation}\label{eqn: lem1 eqn1}
    \nabla d(\tau)=\thickhat{\nabla}d(\tau)+A(y_\tau-\hat{y}_\tau).
\end{equation}
Substituting \eqref{eqn: lem1 eqn1} into \eqref{eqn:d_smoothness}, we obtain the following
\begin{equation}\label{eqn: lem1 eqn2}
   \begin{aligned}
   0 \leq &  d(\sigma) - d(\tau) - \thickhat{\nabla}d(\tau)^\top(\sigma - \tau) \frac{\norm{A}_2^2}{2\alpha}\norm{\sigma - \tau}_2^2\\
    &- \big(A(y_\tau - \hat{y}_\tau)\big)^\top(\sigma - \tau). 
\end{aligned} 
\end{equation}
Furthermore, we can show
\begin{equation}\label{eqn: lem1 eqn3}
   \begin{aligned}
    \Big|\big(A(y_\tau - \hat{y}_\tau)\big)^\top(\sigma - \tau)\Big| \leq &\norm{\hat{y}_\tau - y_\tau}_2\cdot\norm{A}_2 \cdot\norm{\sigma - \tau}_2 \\
    \leq & \frac{\alpha}{2}\norm{\hat{y}_\tau - y_\tau}_2^2 + \frac{\norm{A}_2^2}{2{\alpha}}\norm{\sigma - \tau}_2^2,
\end{aligned} 
\end{equation}
where the first step is due to the Cauchy–Schwarz inequality, and the second step is due to the inequality of arithmetic and geometric inequalities.

Next, we note that \ctwo{$F$~\eqref{eqn:potential_def}} and subsequently $L(y, \tau)$~\eqref{eqn:constrained_lagrangian} is strongly convex under Assumption~\ref{ass:strong_convexity}. We combine this with the fact that $L(y_\tau, \tau)=d(\tau)$ from~\eqref{eqn:dual_function_and_equilibria} to obtain 
\begin{equation}\label{eqn: lem1 eqn4}
    \frac{\alpha}{2}\norm{\hat{y}_\tau - y_\tau}_2^2\leq L(\hat{y}_\tau, \tau)-d(\tau).
\end{equation}
From~\eqref{eqn:eps_tau_wardrop_eq}, $\hat{y}_\tau$ satisfies
\ctwo{
\begin{equation}\label{eqn: lem1 eqn5}
    L(\hat{y}_\tau, \tau) - d(\tau) \leq \epsilon.
\end{equation}}
Summing up\ctwo{~\eqref{eqn:eps_tau_wardrop_eq}},~\eqref{eqn: lem1 eqn2}, \eqref{eqn: lem1 eqn3}, \eqref{eqn: lem1 eqn4}, and $2\times$\eqref{eqn: lem1 eqn5}, we obtain~\eqref{eqn:pseudo_smoothness}, which completes the proof. 
\end{pf}

\subsection{Lemma~\ref{lem:residue_inequality}}
\begin{lem}\label{lem:residue_inequality}
Under Assumption~\ref{ass:strong_convexity}, if \ctwo{$\gamma\leq \frac{\alpha}{2\norm{A}_2^2}$}, $\tau^s$ from Algorithm~\ref{alg:dualAscent} satisfy
\begin{equation}\label{eqn:residue_inequality}
\begin{aligned}
    \norm{\tau^{s+1} - \tau}_2^2 \leq &\norm{\tau^s - \tau}_2^2 + 2\gamma\Big(d(\tau^{s+1}) - L({y}^s, \tau^s) + 2\epsilon^s \\
   & + \thickhat{\nabla}d(\tau^s)^\top(\tau^s - \tau)\Big), \quad \forall \ \tau \in \reals^C_+, \ k \geq 0.
\end{aligned}
\end{equation}
\end{lem}
\begin{pf}
Given $\tau \in \reals_+^C$, let $r^s = \norm{\tau^s - \tau}_2^2$. We compute $r^{s+1} - r^s$ using the law of cosine as  
\begin{equation}\label{eqn: lem3 eqn1}
\begin{aligned}
 r^{s+1}- r^s = &  2(\tau^{s+1} - \tau^s)^\top(\tau^{s+1} - \tau) - \norm{\tau^{s+1} - \tau^s}_2^2.
\end{aligned}
\end{equation}
From line~\ref{algline:dual_ascent_tau_update} of Algorithm~\ref{alg:dualAscent},   \(\tau^{s+1} = [\tau^s+\gamma(Ay^s-b)]_+\). Using~\cite[Lem 3.1]{bubeck2015convex}, the projection  onto \(\mathbb{R}_+^C\) implies that
\begin{equation}\label{eqn: lem3 eqn2}
    0\leq (\tau^s+\gamma\thickhat{\nabla }d(\tau^s)-\tau^{s+1})^\top(\tau^{s+1}-\tau)
\end{equation}
From~\eqref{eqn: lem3 eqn2}, we can upper bound $(\tau^{s+1} - \tau^s)^\top(\tau^{s+1} - \tau) $ and combine with~\eqref{eqn: lem3 eqn1} to obtain
\begin{equation}\label{eqn: lem 3 eqn 3}
\begin{aligned}
& \textstyle r^{s+1} - r^s  \leq \textstyle  2\gamma\thickhat{\nabla} d(\tau^s)^\top(\tau^{s+1} - \tau) - \norm{\tau^{s+1} - \tau^s}_2^2
\end{aligned}
\end{equation}
From Lemma~\ref{lem:approx_smoothness}, we recall
\begin{equation}\label{eqn: lem 3 eqn 4}
\begin{aligned}
  & L(y^s, \tau^s) - d(\tau^{s+1})  - 2\epsilon^s \\
  \leq  &\thickhat{\nabla}d(\tau^s)^\top(\tau^{s}-\tau^{s+1})  + \frac{\norm{A}_2^2}{\alpha}\norm{\tau^{s+1} - \tau^s}_2^2 .
\end{aligned}
\end{equation}
We can then combine~\eqref{eqn: lem 3 eqn 3} and $2\gamma\times$\eqref{eqn: lem 3 eqn 4} to derive 
\begin{equation}\label{eqn: lem 3 eqn 5}
\begin{aligned}
   &r^{s+1} - r^s + 2\gamma( L(y^s, \tau^s) - d(\tau^{s+1})  - 2\epsilon^s )\\
   \leq  &2\gamma \thickhat{\nabla}d(\tau^s)^\top(\tau^{s}-\tau) + (2\gamma \frac{\norm{A}_2^2}{\alpha} - 1) \norm{\tau^{s+1} - \tau^s}_2^2,
\end{aligned}
\end{equation}
and use the fact that $2\gamma\frac{\norm{A}_2^2}{\alpha}\leq 1$ to eliminate the $\norm{\tau^{s+1} - \tau^s}_2^2$ term and complete the proof.
\end{pf}

\end{document}